\newtheorem{theorem}{Theorem}
\newtheorem{lemma}{Lemma}
\newtheorem{cor}{Corollary}
\author{Vadim V.~Lozin\thanks{Mathematics Institute, University of Warwick, Coventry CV4 7AL, UK. E-mail: V.Lozin@warwick.ac.uk.} 
\and Igor Razgon\thanks{Department of Computer Science and Information Systems,
Birkbeck, University of London, E-mail: igor@dc.bbk.ac.uk}
\and
Viktor Zamaraev\thanks{Mathematics Institute, University of Warwick, Coventry CV4 7AL, UK. E-mail: V.Zamaraev@warwick.ac.uk.}}
\date{}
\title{Well-quasi-ordering does not imply bounded clique-width} 
\begin{document}
\maketitle
\begin{abstract}
We present a hereditary class of graphs of unbounded clique-width which is well-quasi-ordered by the induced subgraph 
relation. This result provides a negative answer to the question asked by Daligault, Rao and Thomass\'e in \cite{DRT10}.
\end{abstract}

\section{Introduction}
Well-quasi-ordering ({\sc wqo}) is a highly desirable property and frequently discovered
concept in mathematics and theoretical computer science \cite{Finkel,Kruskal}. 
One of the most remarkable recent results in this area is the proof of Wagner's conjecture
stating that the set of all finite graphs is well-quasi-ordered by the minor relation
\cite{minor-wqo}. This is, however, not the case for the induced subgraph relation,
since the set of cycles $\{C_n| n \geq 3\}$ forms an infinite antichain with respect to this relation.
On the other hand, the induced subgraph relation may become a well-quasi-order when restricted to  
graphs in particular classes, such as cographs \cite{Damaschke} or $k$-letter graphs \cite{Pet02}.
It is interesting to observe that in both examples  we deal with graphs of bounded clique-width,
which is another property of great importance in mathematics and computer science.
Moreover, the same is true for all available examples of graph classes which are 
well-quasi-ordered by the induced subgraph relation (see e.g. \cite{KL11}).
This raises an interesting question whether the clique-width is always bounded for graphs 
in well-quasi-ordered classes. This question was formally stated as an open problem by Daligault, Rao and Thomass\'e in \cite{DRT10}.
In the present paper, we answer this question negatively by exhibiting a hereditary class of graphs of unbounded clique-width 
which is well-quasi-ordered by the induced subgraph relation.

Our result shows that it is generally non-trivial to determine whether a given problem
definable in Monadic Second Order ({\sc mso}) logic is polynomially solvable on a {\sc wqo} class,  
since unboundedness of clique-width does not allow a straightforward application of Courcelle's theorem \cite{CorMakRotics}. 
This makes the {\sc wqo} classes an interesting object to study from the algorithmic perspective. 
By the way, we are not aware of any result in flavour of \cite{KreutzerT10} applied to clique-width and $MSO_1$, e.g. stating 
that if clique-width of a class is sufficiently large, then the $MSO_1$-model checking is intractable
subject to a widely believed complexity theoretical assumption. Therefore, in light of our result,
the tractability of {\sc mso} model checking on {\sc wqo} classes looks an interesting open question.

Graphs in the class introduced in this paper are \emph{dense} (in particular, they are $P_7$-free). 
The density is a necessary condition, because an earlier result \cite{WQOArxiv} shows that for sparse graph
classes (those where a large biclique is forbidden as a subgraph) well quasi-orderability by induced subgraphs 
imply bounded treewidth (and hence bounded clique-width). We believe that the result of
\cite{WQOArxiv} can be strengthened by showing that well quasi-orderability by induced subgraphs
in sparse classes implies bounded pathwidth (and hence \emph{linear} clique-width \cite{GurskiW05}). 
Our result proved in the present paper shows a stark contrast between dense and sparse graphs in this context.

The rest of the paper is structured as follows. 
In Section~2 we define the class of graphs studied in this paper and state the main result. 
The unboundedness of clique-width and well-quasi-orderability by induced subgraphs is proved in Sections 3 and 4, respectively.
We use standard graph-theoretic notation as e.g. in \cite{Diestel}. The notions
of clique-width and well-quasi-ordering are introduced in respective sections where they are actually used.


\section{The main result} 
\label{mainres}


In this section, we define the class ${\cal D}$, which is the main object of the paper, 
and state the main result.

Let $P$ be a path with vertex set $\{ 1, \ldots, n \}$ with two vertices $i$ and $j$ being adjacent
if and only if $|i-j| = 1$. For vertex $i$, let {\it power} $q(i)$ of $i$ be the largest 
$2^k$ that divides $i$. For example, $q(5)=1,q(6)=2,q(8)=8,q(12)=4$.
Add edges to $P$ that connect $i$ and $j$ whenever $q(i)=q(j)$.
We denote the graph obtained in this way by $D_n$. Figure \ref{DNpic} illustrates graph $D_{16}$. 
\begin{figure}[h]
\centering 
\includegraphics[height=5cm]{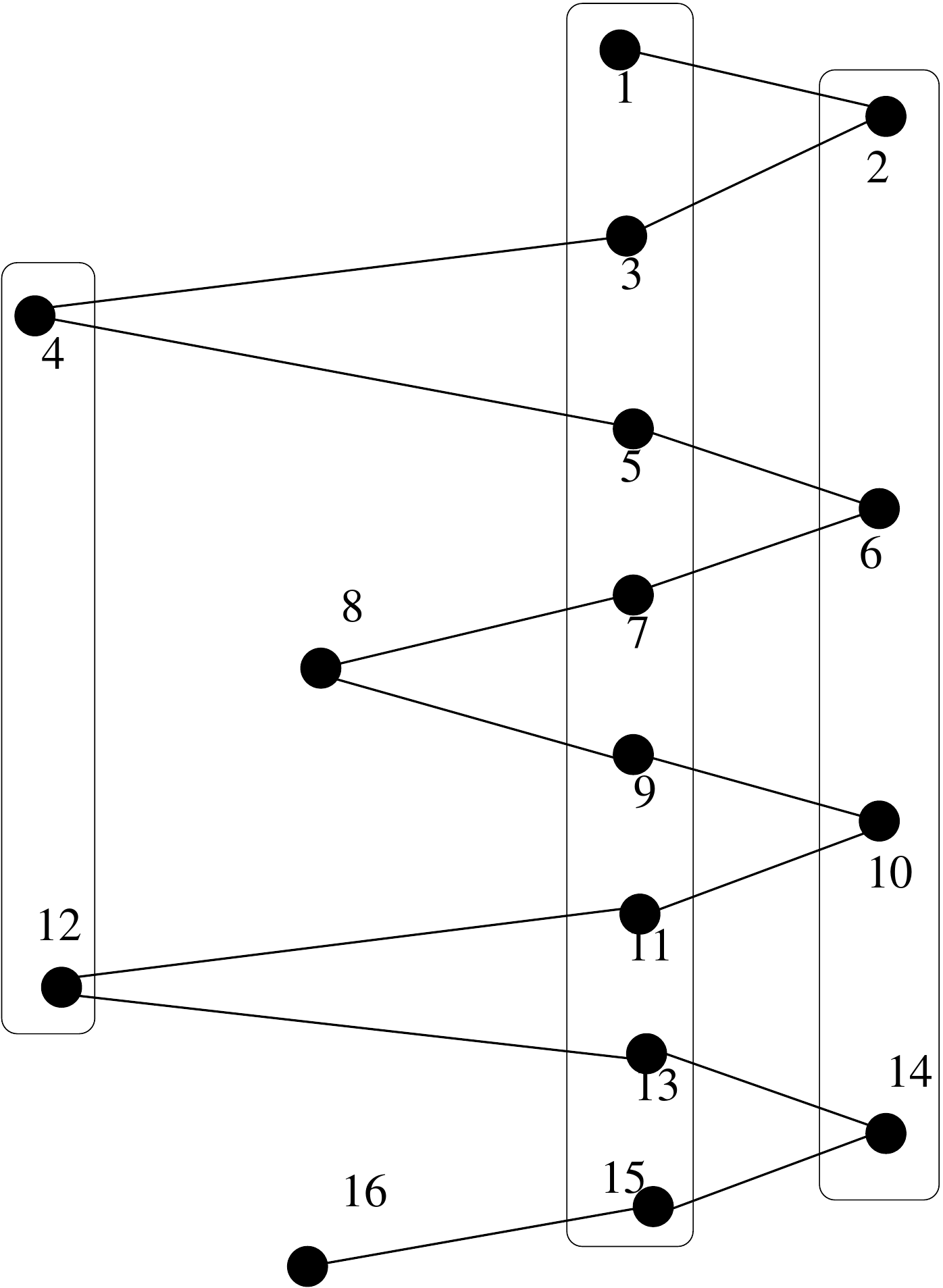}
\caption{Graph $D_{16}$. To avoid shading the picture with many edges, cliques are represented
as rectangular boxes.}
\label{DNpic}
\end{figure}

Clearly, the edges $E(D_n) \setminus E(P)$ form
a set of disjoint cliques and we call them {\it power cliques}. If a power
clique $Q$ contains a vertex $i$ with $q(i)=2^k$ we say that $Q$ \emph{corresponds to} $2^k$.
We call $P$ the \emph{body} of $D_n$, the edges of $E(P)$
the \emph{path edges}, and the edges of $E(D_n) \setminus E(P)$ the \emph{clique edges}.
The class ${\cal D}$ is the set of all graphs $D_n$ and all their induced subgraphs. 
In what follows we prove that 
\begin{itemize}
\item clique-width of graphs in ${\cal D}$ is unbounded  (Section~\ref{nbunbound}),
\item graphs in ${\cal D}$ are well-quasi-ordered by the induced subgraph relation (Section~\ref{secwqo}).
\end{itemize}

These two facts imply the following conclusion, which is the main result of the paper.   

\begin{theorem}
Within the family of hereditary graph classes, there exist classes of 
unbounded clique-width which are well-quasi-ordered by the induced subgraph 
relation.
\end{theorem}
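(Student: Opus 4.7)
The theorem is an existence statement, and the preceding paragraph designates the class $\mathcal{D}$ as the intended witness; since $\mathcal{D}$ is hereditary by construction, my plan reduces to verifying the two bullet points announced at the end of the section: (i) $\mathcal{D}$ has unbounded clique-width, and (ii) $\mathcal{D}$ is well-quasi-ordered by the induced subgraph relation.

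For (i), I would use the standard balanced-cut technique for clique-width lower bounds. If $\mathrm{cwd}(D_n)\le k$, any $k$-expression tree contains an internal node that produces a vertex subset $A\subseteq V(D_n)$ of linear size whose complement is also of linear size, such that vertices in $A$ fall into at most $k$ classes according to their adjacencies to $V(D_n)\setminus A$. Taking $n=2^m$, I would argue that any balanced bipartition $(A,B)$ splits many power cliques across the cut: vertices of $A$ belonging to distinct split power cliques $Q_\ell, Q_{\ell'}$ are distinguished by their external neighbourhoods because they are joined inside $B$ only to $Q_\ell\cap B$ and $Q_{\ell'}\cap B$ respectively. A counting argument bounding how few of the $m$ power levels can escape being split by a balanced cut then forces $k=\Omega(\log n)$.

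For (ii), my plan is to encode each $G\in\mathcal{D}$ as a finite labelled tree (or word) over a bounded alphabet in such a way that the induced-subgraph relation on $\mathcal{D}$ corresponds to an order-preserving embedding of codes, and then to invoke Kruskal's tree theorem (or Higman's lemma). The recursive definition of the power function, namely $q(i)=2^k$ exactly when $i$ equals $2^k$ times an odd number, organises $V(D_n)$ into a binary-tree-like hierarchy indexed by powers of $2$; an induced subgraph of $D_n$ is just a subset of positions, and both the clique adjacencies and the path adjacencies surviving in the induced subgraph are determined by the power levels of the selected vertices together with their relative order along the path. Recording this information with a bounded alphabet should yield the desired encoding, and an infinite sequence in $\mathcal{D}$ would produce an infinite sequence of codes in which Kruskal's theorem supplies a comparable pair, translatable back into an induced-subgraph embedding.

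The main obstacle is step (ii). The clique-width lower bound is an adaptation of a by-now routine balanced-cut argument, whereas the WQO proof requires a genuinely new encoding: it has to capture simultaneously the path edges, the clique edges, and the fact that after deleting vertices the residual "gaps" may change while power-clique memberships persist. Designing the encoding so that it is \emph{faithful} to the induced-subgraph relation \emph{and} falls under the hypotheses of Higman or Kruskal — as opposed to merely yielding an ill-quasi-order like the one witnessed by the antichain of cycles — is the subtle combinatorial heart of the argument.
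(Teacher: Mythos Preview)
Your plan for (i) has a concrete gap. It is \emph{not} true that every balanced bipartition of $V(D_n)$ splits many power cliques. Take $A$ to consist of $\lfloor n/3\rfloor$ odd vertices: then every power clique corresponding to $2^k$ with $k\ge 1$ lies entirely in $B$, and at most the single clique of odd vertices is split. Your ``counting argument bounding how few power levels can escape being split'' therefore cannot succeed on its own. The paper repairs this with a dichotomy you do not mention: either the path $P$ restricted to $A$ breaks into many components, in which case the \emph{path} edges at the component boundaries already produce many pairwise distinct external neighbourhoods (Lemma~\ref{manysplits3}); or both $P^A$ and $P^{\overline A}$ have few components and hence each contains a long subpath, which by periodicity meets every low power clique and forces many cliques to be split (Lemmas~\ref{manysplits} and~\ref{manybig}). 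You need both branches.

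Your plan for (ii) is too optimistic about a ``bounded alphabet''. The power level of a vertex can be arbitrarily large (a single vertex $2^k$ already has level $k$), so no fixed finite alphabet records levels directly, and a naive word or tree encoding will not land inside Higman's or Kruskal's hypotheses. The paper's route is quite different from yours: it first disposes of sequences containing arbitrarily long factor-components by showing that $D_n$ embeds into any sufficiently long factor (Lemma~\ref{longpaths}); in the remaining case all factor-components have length at most some $c$, each factor has a \emph{unique} maximal-power vertex (Lemma~\ref{maxone}), and the isomorphism type of a factor together with its external adjacencies is determined by a bounded ``shape'' ($t$-class) plus the level of this maximal vertex ($\ell$-class). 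Levels below $\log c$ must be preserved exactly, while higher levels only need to be matched injectively; this is captured by a matrix with finitely many columns and handled by iterated Higman (Lemmas~\ref{indorder} and~\ref{short}). Your tree-encoding idea does not yet contain the two crucial observations---the long-factor case split and the ``only the maximal vertex can have high power'' fact---that make a finite encoding possible.
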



\section{Clique-width is unbounded in ${\cal D}$} 
\label{nbunbound}


The clique-width of a graph $G$, denoted $cwd(G)$, is the minimum number of labels needed to construct the graph 
by means of the four graph operations: creation of a new vertex, disjoint union of two labeled graphs, 
connecting vertices with specified labels $i$ and $j$, and renaming label $i$ to label $j$. 
Every graph $G$ can be constructed by means of these four operations, and the process of the construction 
can be described either by an algebraic expression or by a rooted binary tree, 
whose leaves correspond to the vertices of $G$, the root corresponds to $G$ and
the internal nodes correspond to the union operations.

Given a graph $G$ and a subset $U\subset V(G)$, we denote by $\overline{U}$ the set $V(G)-U$.
We say that two vertices $x,y\in U$ are $U$-{\it similar} if $N(x)\cap \overline{U}=N(y)\cap \overline{U}$,
i.e. if $x$ and $y$ have the same neighbourhood outside of $U$. Clearly, the $U$-similarity is an equivalence 
relation and we denote the number of similarity classes of $U$ by $\mu_G(U)$. Also, we denote 
$$
\mu(G)=\min\limits_{\frac{1}{3}n \leq |U| \leq \frac{2}{3}n} \mu_G(U),
$$
where $n=|V(G)|$. 
Our proof of the main result of this section is based on the following lemma.

\begin{lemma} \label{cwdnb}
For any graph $G$, $\mu(G) \leq cwd(G)$.
\end{lemma}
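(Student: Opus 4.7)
The plan is to fix a $k$-expression constructing $G$ with $k=cwd(G)$ labels and view it as a rooted binary tree $T$ whose leaves correspond to the vertices of $G$ and whose internal nodes correspond to the disjoint-union operations (the associated relabelings and edge-additions being carried out along the edges of $T$). For every node $v\in T$, write $U_v\subseteq V(G)$ for the set of vertices appearing at the leaves of the subtree rooted at $v$. The goal is to exhibit a single node $v$ for which $\frac{1}{3}n\leq |U_v|\leq \frac{2}{3}n$ and simultaneously $\mu_G(U_v)\leq k$.

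First I would locate such a balanced node by the standard descent argument on binary trees. Starting at the root, where $|U_v|=n$, walk at each step to the child whose subtree contains more leaves; eventually one reaches, for the first time, a node $v$ with $|U_v|\leq \frac{2}{3}n$. Its parent $p$ then satisfies $|U_p|>\frac{2}{3}n$, and since $v$ is the larger of the two children of $p$ we get $|U_v|\geq |U_p|/2>\frac{1}{3}n$. Hence $U_v$ is admissible in the minimum defining $\mu(G)$.

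Next I would bound $\mu_G(U_v)$ by $k$. At the moment when the subtree rooted at $v$ has just been assembled, every vertex of $U_v$ carries some label from $\{1,\ldots,k\}$, which partitions $U_v$ into at most $k$ label classes. The crucial point is that every operation performed after the assembly of $v$ treats equally-labeled vertices identically: a renaming sends an entire label class to a new label, and an edge-addition connects \emph{all} current vertices of some label $i$ to \emph{all} current vertices of some label $j$. Consequently, two vertices $x,y\in U_v$ sharing a label at the moment $v$ is completed continue to share a label at every subsequent moment; in particular, every edge added later between $U_v$ and $\overline{U_v}$ contains both or neither of $x,y$, which yields $N(x)\cap\overline{U_v}=N(y)\cap\overline{U_v}$. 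Thus each label class is contained in a single $U_v$-similarity class, and $\mu(G)\leq\mu_G(U_v)\leq k=cwd(G)$.

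The descent step is a routine combinatorial fact about binary trees; the genuine content of the proof, and the step I expect to require the most care in writing up, is the label-preservation argument — specifically, verifying that disjoint unions, relabelings and edge-additions performed \emph{after} $v$ preserve (rather than refine) the equivalence on $U_v$ induced by the labels at the instant $v$ is completed.
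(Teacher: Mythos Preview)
Your argument is correct and follows the same approach as the paper: pick a balanced node $v$ in an optimal expression tree and bound $\mu_G(U_v)$ by the number of labels. The paper is terser---it cites \cite{rcw} for the inequality $\mu_G(U_t)\le cwd(G)$ and invokes the balanced-node fact as folklore---whereas you have spelled out both steps explicitly.
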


\begin{proof}
Let $T$ be an {\it optimal} decomposition tree, $t$ a node of $T$ and $U_t$ the set of vertices of $G$ that are leaves 
of the subtree of $T$ rooted at $t$. It is known (see e.g. \cite{rcw}) that $cwd(G)\ge \mu_G(U_t)$ for any node $t$ of $T$.
According to a well known folklore result, the binary tree $T$ has a node $t$ such that 
$\frac{1}{3}|V(G)| \leq |U_t| \leq \frac{2}{3}|V(G)|$, in which case $\mu_G(U_t)\ge \mu(G)$. Hence the lemma.
\qed \end{proof}

\medskip

Let $U \subseteq V(D_n)$, and let $P$ be the body of $D_n$. We denote by $P^U$ the 
subgraph of $P$ induced by $U$. In other words, 
$P^U$ is obtained from $D_n[U]$ by removing the clique edges. Since $P$ is a path,
$P^U$ is a graph every connected component of which is a path. 

\begin{lemma} 
\label{manysplits3}
If $P^U$ has $c+1$ connected components, then $\mu_{D_n}(U) \geq c/2$. 
\end{lemma}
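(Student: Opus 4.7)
The plan is to select $c$ witness vertices in $U$ whose outside-$U$ neighbourhoods are diverse enough that no three of them can coincide; this yields at least $c/2$ similarity classes.

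For witnesses I would use the left endpoints $\ell_2 < \ell_3 < \cdots < \ell_{c+1}$ of the components $C_2, C_3, \ldots, C_{c+1}$ of $P^U$, listed in left-to-right order on $P$. Each such $\ell_\alpha$ (with $\alpha \geq 2$) satisfies $\ell_\alpha - 1 \in \overline{U}$, since the vertex immediately preceding the start of any component other than the first is a gap vertex of $P^U$; consequently $\ell_\alpha - 1 \in N(\ell_\alpha) \cap \overline{U}$. My central claim is that no three of these $c$ witnesses can lie in a single $U$-similarity class. Granted this, they must occupy at least $\lceil c/2 \rceil \geq c/2$ classes, which gives the desired lower bound on $\mu_{D_n}(U)$.

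To establish the claim I would argue by contradiction, assuming $\ell_i < \ell_j < \ell_k$ share an outside neighbourhood $S$. Then the distinct vertices $\ell_i - 1,\ \ell_j - 1,\ \ell_k - 1$ all lie in $S$, hence also in $N(\ell_j) \cap \overline{U}$ and $N(\ell_k) \cap \overline{U}$. For a ``cross'' incidence such as $\ell_i - 1 \in N(\ell_j)$, the path-edge alternative would force $\ell_i - 1 \in \{\ell_j - 1,\ \ell_j + 1\}$, but the strict ordering $\ell_i - 1 < \ell_i < \ell_j$ rules both possibilities out. So the incidence must come from a clique edge, giving $q(\ell_i - 1) = q(\ell_j)$. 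Identical reasoning yields $q(\ell_i - 1) = q(\ell_k)$ and $q(\ell_j - 1) = q(\ell_k)$.

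Chaining these three equalities produces $q(\ell_j - 1) = q(\ell_j)$, which is impossible because $\ell_j - 1$ and $\ell_j$ are consecutive positive integers, so one is odd with power $1$ and the other is even with power at least $2$. The only place where one has to be careful is verifying that every cross-incidence really does collapse to a clique edge rather than a path edge; the linear ordering of the three witnesses and the fact that $\ell_\alpha - 1$ always sits to the left of $\ell_\alpha$ take care of this automatically, so I do not anticipate a serious obstacle beyond this bookkeeping.
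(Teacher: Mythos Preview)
Your argument is correct. The three cross-incidences you derive,
$q(\ell_i-1)=q(\ell_j)$, $q(\ell_i-1)=q(\ell_k)$, and $q(\ell_j-1)=q(\ell_k)$,
indeed combine to give $q(\ell_j-1)=q(\ell_j)$, which is impossible for consecutive integers. The bookkeeping you flag (ruling out path edges for the cross pairs) goes through since $\ell_j-(\ell_i-1)\ge 2$ whenever $\ell_i<\ell_j$.

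The paper proceeds differently. It takes the \emph{right} endpoints $u_1,\dots,u_c$ of the first $c$ components, forms the matching $(u_i,\overline{u}_i)$ with $\overline{u}_i=u_i+1\in\overline U$, and then restricts to the (at least $c/2$) indices of a single parity. Within that subset every $\overline{u}_j$ has the opposite parity from every $u_i$, so clique edges are impossible between them, and for $i<j$ the ordering rules out path edges; hence these $\ge c/2$ witnesses have pairwise distinct outside neighbourhoods. So the paper \emph{selects} a good half up front via parity and then shows pairwise distinctness, whereas you keep all $c$ witnesses and prove a ``no three collide'' statement, getting $\lceil c/2\rceil$ classes by pigeonhole. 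Your route avoids the parity observation at the cost of juggling a triple; the paper's route exhibits an explicit set of $c/2$ pairwise non-similar vertices. Both are equally short.
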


\begin{proof}
In the $i$-th connected component of $P^U$, $i \leq c$, we choose the last vertex (listed along the path $P$)
and denote it by $u_i$. The next vertex of $P$, denoted $\overline{u}_i$, belongs to $\overline{U}$. 
This creates a matching of size $c$ with edges $(u_i,\overline{u}_i)$. 
Note that none of $(u_i,\overline{u}_j)$ is a path edge for $i < j$.
Among the chosen vertices of $U$ at least half have the same parity.
Their respective matched vertices of $\overline{U}$ have the opposite parity. Since the clique edges connect 
only the vertices of the {\it same} parity, we conclude that at least $c/2$ vertices of $U$ have pairwise 
different neighbourhoods in $\overline{U}$, i.e.  $\mu_{D_n}(U) \geq c/2$.
\qed \end{proof}

Note that if  $P^{U}$ has $c$ connected components, then  $P^{\overline{U}}$  has at
least $c-1$ connected components. Therefore, in light of Lemma \ref{manysplits3}, it remains to consider
the case where both $P^{U}$ and $P^{\overline{U}}$ have a limited number of connected components. By Lemma \ref{cwdnb}
we can assume that both $U$ and $\overline{U}$ are `large', and hence each of $P^{U}$ and $P^{\overline{U}}$
has a `large' connected component.
In order to address this case we use the following lemma which states that a large number of power cliques
intersecting both $U$ and $\overline{U}$ implies a large value of $\mu_{D_n}(U)$.

\begin{lemma} \label{manysplits}
If there exist $c$ different power cliques $Q_1, \ldots, Q_c$ each of which
\begin{enumerate}
	\item[(1)] corresponds to a power of 2 greater than 1 and
	\item[(2)] intersects both $U$ and $\overline{U}$
\end{enumerate}
then $\mu_{D_n}(U) \geq c$.
\end{lemma}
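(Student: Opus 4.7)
The goal is to exhibit $c$ vertices of $U$ whose neighbourhoods in $\overline{U}$ are pairwise distinct. The natural candidates are representatives of the $c$ given power cliques: for each $i \in \{1,\ldots,c\}$, use condition (2) to pick some $u_i \in Q_i \cap U$ and some $v_i \in Q_i \cap \overline{U}$. I claim the vertices $u_1,\ldots,u_c$ witness $\mu_{D_n}(U) \ge c$.

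The key observation is a parity argument driven by condition (1). Since each $Q_i$ corresponds to a power of $2$ strictly greater than $1$, every vertex of $Q_i$ is even. Therefore $u_i$ is even, and the only odd neighbours of $u_i$ in $D_n$ can be the two path-neighbours $u_i - 1$ and $u_i + 1$, because clique edges inside $Q_i$ connect $u_i$ only to other even vertices. In particular, the set $N(u_i) \cap \overline{U}$ decomposes into an ``even part'' equal to $Q_i \cap \overline{U}$ (which is nonempty, containing $v_i$) and an ``odd part'' contained in $\{u_i - 1, u_i + 1\} \cap \overline{U}$.

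Now fix $i \ne j$. Distinct power cliques are disjoint, so $u_j \notin Q_i$; moreover $u_j$ is also even, so $u_j$ cannot be joined to any vertex of $Q_i$ by a path edge either (path edges connect vertices of opposite parity). Hence $u_j$ has no neighbour at all in $Q_i$, and in particular $v_i \notin N(u_j) \cap \overline{U}$, while $v_i \in N(u_i) \cap \overline{U}$. This shows $N(u_i) \cap \overline{U} \ne N(u_j) \cap \overline{U}$, so $u_i$ and $u_j$ lie in different $U$-similarity classes.

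The argument is essentially combinatorial book-keeping once the parity observation is in place; the only potentially delicate point is confirming that $u_j$ cannot accidentally meet $Q_i$ through a path edge, but this is exactly what condition (1) rules out by forcing all vertices involved to be even. Thus the $c$ vertices $u_1,\ldots,u_c$ are pairwise $U$-dissimilar, giving $\mu_{D_n}(U) \ge c$ as required. \qed
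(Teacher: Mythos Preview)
Your proof is correct and follows essentially the same approach as the paper: pick representatives $u_i\in Q_i\cap U$ and $v_i\in Q_i\cap\overline{U}$, observe that condition~(1) forces all these vertices to be even, and use the fact that two even vertices of $D_n$ are adjacent if and only if they lie in the same power clique to conclude that $\{(u_i,v_i)\}$ is an induced matching, whence the $u_i$ have pairwise distinct neighbourhoods in $\overline{U}$. Your write-up simply unpacks this last fact into the separate path-edge and clique-edge cases, which is fine.
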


\begin{proof}
	Let $u_i$ and $\overline{u}_i$ be some vertices in $Q_i$, which belong to
	$U$ and $\overline{U}$, respectively. Since all the vertices in 
	$M = \{ u_1, \overline{u}_1, \ldots, u_c, \overline{u}_c \}$ are even and two
	even vertices are adjacent in $D_n$ if and only if they belong to the same
	power clique, $M$ induces a matching in $D_n$ with edges 
	$(u_i, \overline{u}_i)$, $i= 1 \ldots, c$. This implies that $u_1, \ldots, u_c$
	have pairwise different neighbourhoods in $\overline{U}$, that is 
	$\mu_{D_n}(U) \geq c$.
\qed \end{proof}

The only remaining ingredient to prove the main result of this section is the following lemma.

\begin{lemma}\label{manybig}
	Let $c$ be a constant and $P'$ a subpath of $P$ of length at least $2^{c+1}$.
	Then $P'$ intersects each of the power cliques corresponding to $2^1, \dots, 2^c$.
\end{lemma}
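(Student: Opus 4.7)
The plan is to treat each $k \in \{1,\ldots,c\}$ separately and show that $P'$ must contain at least one vertex of power exactly $2^k$, i.e.\ an odd multiple of $2^k$ inside $\{1,\ldots,n\}$. Once this is done for every such $k$, the conclusion follows immediately.

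The core step is a purely number-theoretic observation: any $2^{k+1}$ consecutive positive integers contain at least one odd multiple of $2^k$. To see this, note that the multiples of $2^k$ form an arithmetic progression of common difference $2^k$, so any window of $2^{k+1}$ consecutive integers must meet this progression in at least two elements, say $j\cdot 2^k$ and $(j+1)\cdot 2^k$. Since $j$ and $j+1$ have opposite parities, exactly one of these two numbers is an odd multiple of $2^k$ (the other being divisible by $2^{k+1}$), so one of them has $q$-value equal to $2^k$.

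To finish, I would simply combine this with the hypothesis on the length of $P'$. Since $P'$ has length at least $2^{c+1}$, its vertex set is an interval of at least $2^{c+1}+1$ consecutive integers of $\{1,\ldots,n\}$. For every $k \in \{1,\ldots,c\}$ we have $2^{k+1} \leq 2^{c+1}$, so this interval contains some sub-window of $2^{k+1}$ consecutive integers; by the observation above, this sub-window supplies a vertex belonging to the power clique corresponding to $2^k$.

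I do not foresee a real obstacle in this argument; the only mild care needed is keeping the convention about \emph{length} straight (number of edges vs.\ number of vertices), but the bound $2^{c+1}$ is comfortably large enough that either reading yields an interval of at least $2^{k+1}$ consecutive integers for each $k\le c$, which is all the counting argument requires.
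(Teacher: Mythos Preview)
Your argument is correct and is essentially the paper's own proof, just written out more explicitly: the paper observes that the vertices with $q(v)=2^k$ are exactly the odd multiples of $2^k$ and hence occur along $P$ with period $2^{k+1}$, which is precisely your ``two multiples of $2^k$, one of them odd'' counting in a window of $2^{k+1}$ consecutive integers. Your remark about the edge/vertex convention for ``length'' is a fair caveat, and as you note the bound is slack enough that either reading works.
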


\begin{proof}
	The statement easily follows from the fact that for a fixed $k$ vertices $v$ with 
	$q(v) = 2^k$ are of the form $v = 2^k(2p+1)$. That is, they
	occur in $P$ with period $2^{k+1}$.
\qed \end{proof}

Now we are ready to prove the main result of this section. 

\begin{theorem} \label{nbub}
	Let $n$ and $c$ be natural numbers such that $n \geq 3((2c+1)(2^{c+1}-1)+1)$. Then
	$cwd(D_n) \geq c$ and hence the clique-width of graphs in $\cal D$ is unbounded.
\end{theorem}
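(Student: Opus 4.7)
The plan is to combine Lemma \ref{cwdnb} with the three preceding structural lemmas via a dichotomy on the component structure of $P^U$. By Lemma \ref{cwdnb} it suffices to prove that $\mu_{D_n}(U)\ge c$ for every $U\subseteq V(D_n)$ with $n/3\le|U|\le 2n/3$, so fix such a $U$.

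The case split is on the number of connected components of $P^U$. In the first case, $P^U$ has at least $2c+1$ components; then Lemma \ref{manysplits3} (applied with its parameter equal to $2c$) immediately yields $\mu_{D_n}(U)\ge c$. In the second case, $P^U$ has at most $2c$ components, and the whole point of the lower bound on $n$ is to supply long subpaths in both $U$ and $\overline{U}$. Two observations do the work: (a) by pigeonhole the largest component of $P^U$ contains at least $\lceil|U|/(2c)\rceil$ vertices, and since $|U|\ge n/3\ge(2c+1)(2^{c+1}-1)+1>2c(2^{c+1}-1)$ this is at least $2^{c+1}$; (b) because the components of $P^{\overline{U}}$ sit in the gaps between components of $P^U$, $P^{\overline{U}}$ has at most $2c+1$ components, and the same pigeonhole applied to $|\overline{U}|\ge n/3\ge(2c+1)(2^{c+1}-1)+1$ gives a component of $P^{\overline{U}}$ with at least $2^{c+1}$ vertices.

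Now Lemma \ref{manybig} applies to each of these two long subpaths (one entirely inside $U$, the other entirely inside $\overline{U}$), and tells us that each meets every power clique corresponding to $2^1,\ldots,2^c$. Consequently all $c$ of those power cliques are split between $U$ and $\overline{U}$, and since each of them corresponds to a power of $2$ strictly greater than $1$, Lemma \ref{manysplits} gives $\mu_{D_n}(U)\ge c$. Either case yields the same conclusion, so $\mu(D_n)\ge c$, and another appeal to Lemma \ref{cwdnb} finishes the first assertion. The unboundedness of clique-width in $\mathcal{D}$ follows because $c$ is arbitrary: for each $c$ the graph $D_n$ with $n\ge 3((2c+1)(2^{c+1}-1)+1)$ belongs to $\mathcal{D}$ and witnesses $cwd\ge c$.

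The only real obstacle is the bookkeeping in Case 2: one must verify that the specific constant $3((2c+1)(2^{c+1}-1)+1)$ is tight enough to force a component of length $2^{c+1}$ in $P^{\overline{U}}$, which is the binding side because $P^{\overline{U}}$ may have $2c+1$ components rather than $2c$. This is precisely what the factor $(2c+1)(2^{c+1}-1)+1$ is engineered to guarantee, so no additional combinatorics is needed beyond the pigeonhole observation above.
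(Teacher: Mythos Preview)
Your proof is correct and follows essentially the same approach as the paper: the identical dichotomy on the number of components of $P^U$, the same pigeonhole argument in the second case to locate long subpaths in both $P^U$ and $P^{\overline{U}}$, and the same sequence of appeals to Lemmas~\ref{manysplits3}, \ref{manybig}, \ref{manysplits}, and \ref{cwdnb}. The only difference is cosmetic---you spell out the pigeonhole arithmetic a bit more explicitly than the paper does.
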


\begin{proof}
	Let $U$ be an arbitrary subset of vertices of $D_n$, such that $\frac{n}{3} \leq |U| \leq \frac{2n}{3}$.
	Note that the choice of $U$ implies that the cardinalities of both $U$ and $\overline{U}$ are at least
	$\frac{n}{3} \geq (2c+1)(2^{c+1}-1)+1$.
	
	If $P^U$ has at least $2c+1$ connected components, then by Lemma \ref{manysplits3}
	$\mu_{D_n}(U) \geq c$. Otherwise $P^U$ has less than $2c+1$ connected components 
	and $P^{\overline{U}}$ has less than $2c+2$
	connected components. By the pigeonhole principle, both graphs have connected components
	of size at least $2^{c+1}$. Clearly, these connected components are disjoint subpaths of $P$.
	By Lemma~\ref{manybig}, the power cliques corresponding to $2^1, \ldots, 2^c$ intersect
	both $U$ and $\overline{U}$, and hence, by Lemma~\ref{manysplits}, $\mu_{D_n}(U) \geq c$.
	
	Since $U$ has been chosen arbitrarily, we conclude that $\mu(D_n) \geq c$, and therefore, by 
	Lemma~\ref{cwdnb}, $cwd(D_n) \geq c$, as required.
\qed \end{proof}


\section{${\cal D}$ is WQO by induced subgraphs} 
\label{secwqo}

A binary relation $\le$ on a set $W$ is a {\it quasi-order} (also known as {\it preorder}) 
if it is reflexive and transitive. Two elements $x,y \in W$ are said to be comparable 
with respect to $\le$ if either $x\le y$ or $y\le x$. Otherwise, $x$ and $y$ are incomparable.
A set of pairwise comparable elements is called a {\it chain} and a set of pairwise incomparable elements
an {\it antichain}. A quasi-order $(W,\le)$ is a {\it well-quasi-order} ({\sc wqo}) if it contains neither 
infinite strictly decreasing chains nor infinite antichains.

In this section, we show that graphs in ${\cal D}$ are well-quasi-ordered by the induced subgraph relation.
In the proof we apply the celebrated Higman's lemma \cite{Hig52} which can be stated as follows.

For an arbitrary set $M$, let $M^*$ be the set of all finite sequences of elements of $M$.
Any quasi-order $\le$ on $M$ defines a quasi-order $\preceq$ on $M^*$ as follows:
$(a_1,\ldots,a_m)\preceq (b_1,\ldots,b_n)$ if and only if there is an order-preserving injection 
$f:\ \{a_1,\ldots,a_m\}\to \{b_1,\ldots,b_n\}$ with $a_i\le f(a_i)$ for each $i=1,\ldots,m$. 

\begin{lemma}\label{HL} \cite{Hig52}
If $(M,\le)$ is a {\sc wqo}, then $(M^*,\preceq)$ is a {\sc wqo}.
\end{lemma}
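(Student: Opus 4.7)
The plan is to prove Higman's Lemma via the classical minimal bad sequence argument of Nash-Williams. Call an infinite sequence $(s_i)_{i \geq 1}$ in $M^*$ \emph{bad} if $s_i \not\preceq s_j$ for all $i < j$; note that the existence of a bad sequence is equivalent to the existence of either an infinite antichain or an infinite strictly descending chain, so proving that $M^*$ contains no bad sequence proves exactly that $(M^*,\preceq)$ is a \textsc{wqo}. Assume for contradiction that a bad sequence exists, and recursively construct a \emph{minimal} bad sequence $(s_i)$: choose $s_1$ of minimum length among elements of $M^*$ that appear as the first term of some bad sequence; having chosen $s_1,\dots,s_{k-1}$, choose $s_k$ of minimum length among elements $t \in M^*$ such that $s_1,\dots,s_{k-1},t$ extends to a bad sequence. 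The empty word embeds into every word, so none of the $s_i$ is empty; write $s_i = a_i \cdot t_i$ with $a_i \in M$ and $t_i \in M^*$.

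The next step is to exploit the \textsc{wqo} hypothesis on $M$: the sequence $(a_i)$ has an infinite non-decreasing subsequence $a_{i_1} \leq a_{i_2} \leq \dots$ Now form the hybrid sequence
\[
s_1,\, s_2,\, \dots,\, s_{i_1-1},\, t_{i_1},\, t_{i_2},\, t_{i_3},\,\dots
\]
Because $|t_{i_1}| < |s_{i_1}|$, the minimality of $s_{i_1}$ forces this hybrid sequence to be \emph{good}: some earlier term embeds into a later one. A short case analysis produces the contradiction. If both witnesses lie in the initial segment $s_1,\dots,s_{i_1-1}$, this directly contradicts the badness of $(s_i)$. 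If $s_j$ embeds into some tail $t_{i_\ell}$, then since $t_{i_\ell}$ trivially embeds into $s_{i_\ell}$ (via the identity on its letters), we obtain $s_j \preceq s_{i_\ell}$ with $j < i_1 \leq i_\ell$, again contradicting badness. If $t_{i_p} \preceq t_{i_q}$ with $p < q$, then combined with $a_{i_p} \leq a_{i_q}$ we can extend the embedding by sending the leading letter of $s_{i_p}$ to that of $s_{i_q}$, yielding $s_{i_p} \preceq s_{i_q}$ with $i_p < i_q$, which is the final contradiction.

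The main subtlety — and what I expect to be the only genuine obstacle — is justifying the construction of the minimal bad sequence. At each stage one must verify that the set of candidate lengths is nonempty (so that a minimum exists) and that the chosen $s_k$ can itself be extended to a full bad sequence; this uses the assumed existence of at least one bad sequence together with a dependent-choice-style selection principle. Once this foundational step is cleanly set up, the rest of the argument is the combinatorial dichotomy sketched above, and the conclusion that $(M^*, \preceq)$ is a \textsc{wqo} follows immediately.
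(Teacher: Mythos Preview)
Your argument is the standard Nash-Williams minimal bad sequence proof of Higman's Lemma, and it is correct as written; the one point you flag as a subtlety (that the recursive choice of a minimal bad sequence is legitimate) is indeed handled by dependent choice together with well-ordering of the natural numbers, exactly as you indicate. There is nothing to compare against here: the paper does not prove this lemma at all but simply quotes it from Higman's original paper \cite{Hig52}, so your proposal supplies a complete proof where the paper supplies only a citation.
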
  



Obviously, the induced subgraph relation contains no infinite strictly decreasing chains. Therefore, 
to prove that this relation is a {\sc wqo} on ${\cal D}$ we need to 
show that for each  infinite sequence ${\cal G}=G_1,G_2 \dots$ of graphs in ${\cal D}$
there are $i,j$ such that $G_i$ is an induced subgraph of $G_j$. 

We recall that $V(D_n)$ is the set of integers $1,2,\ldots,n$ listed along the body of $D_n$
and any graph in $\cal D$ is an induced subgraph of $D_n$ with some $n$.
Among all possible sets of integers inducing a graph (isomorphic to) $G\in \cal D$ we pick \emph{one}
(arbitrarily) and identify $G$ with this set.  

Any set of consecutive integers will be called an {\it interval} and any graph in $\cal D$ 
induced by an interval will be called a {\it factor}. The number of elements in an interval inducing 
a factor is called the {\it length} of the factor. If a graph $G\in \cal D$ is not 
a factor, its vertex set can be split into maximal intervals and we call the subgraphs of $G$
induced by these intervals {\it factor-components}  of $G$. 
The set of all factor-components of $G$ will be denoted ${\cal F}(G)$.

\begin{lemma} \label{longpaths}
If ${\cal G}$ contains graphs with arbitrarily long factor-components, then ${\cal G}$ is not an antichain.
\end{lemma}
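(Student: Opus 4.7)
The plan is to prove a universality property for long factors: for every $H \in \mathcal{D}$ there exists a threshold $L(H)$ such that every factor of length at least $L(H)$ contains $H$ as an induced subgraph. Granting this, the lemma follows at once. Pick any $G_1 \in \mathcal{G}$ and set $L := L(G_1)$. The hypothesis that $\mathcal{G}$ contains graphs with arbitrarily long factor-components forces infinitely many indices $j$ to satisfy that $G_j$ has a factor-component of length at least $L$, for otherwise the factor-component lengths occurring in $\mathcal{G}$ would be bounded, contradicting the assumption. Choose such a $j > 1$, let $F$ be a factor-component of $G_j$ of length at least $L$, and apply universality to embed $G_1$ as an induced subgraph of $F$. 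Since $F$ is itself an induced subgraph of $G_j$, we obtain $G_1 \leq G_j$ with $1 < j$, so $\mathcal{G}$ is not an antichain.

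The universality claim rests on one short arithmetic observation about the power function: if $1 \leq s < 2^K$, then $q(s + t \cdot 2^K) = q(s)$ for every non-negative integer $t$. Indeed, $q(s) \leq s < 2^K$, so writing $q(s) = 2^j$ with $j < K$ and $s = 2^j(2k+1)$, one has $s + t\cdot 2^K = 2^j\bigl(2k + 1 + t \cdot 2^{K-j}\bigr)$ and the second factor remains odd because $K - j \geq 1$. To apply this, fix any representation $H = D_m[S]$ with $S \subseteq \{1,\dots,m\}$, choose $K$ with $2^K > \max S$, and set $L(H) := 2^{K+1}$. Inside any target interval $[a,b]$ of length at least $L(H)$, a pigeonhole argument locates a $2^K$-aligned block $[t \cdot 2^K + 1,\, (t+1)\cdot 2^K] \subseteq [a,b]$; since $\max S < 2^K$, the translated set $S' := S + t \cdot 2^K$ sits inside this block, hence inside $[a,b]$. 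The map $s \mapsto s + t \cdot 2^K$ is then an induced-subgraph isomorphism from $H$ onto the subgraph on $S'$: path edges are preserved because translation leaves integer differences unchanged, and clique edges are preserved by the arithmetic observation.

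The main obstacle is bookkeeping: we must arrange simultaneously that the shift preserves $q$-values of all vertices of $S$ (for which the condition $2^K > \max S$ is essential) and that the translated copy fits into the target interval (for which the length bound $L(H) = 2^{K+1}$ is essential). Once both are secured the rest of the argument is routine, and no appeal to Higman's lemma is required at this stage; Higman's lemma will instead be invoked later to handle the complementary case in which all factor-components of the graphs in $\mathcal{G}$ have bounded length.
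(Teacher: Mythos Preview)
Your proof is correct and follows essentially the same approach as the paper's: both arguments reduce to the arithmetic fact that shifting by an integer of sufficiently high $2$-adic valuation preserves $q$, and both locate a suitable shift inside a long enough factor (you via a $2^K$-aligned block, the paper via Lemma~\ref{manybig} to find a vertex $y$ with $q(y)=2^k>n$ and then the map $z\mapsto y+z$). The only cosmetic difference is that the paper first embeds $D_n$ (with $G_i\leq D_n$) and appeals to transitivity, whereas you embed $G_1$ directly; the constants differ accordingly ($5n$ versus $2^{K+1}$) but the idea is the same.
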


\begin{proof}
Pick an arbitrary $G_i$ and let $n$ be the smallest number such
that $G_i$ is an induced subgraph of $D_n$.
By our assumption, there is $G_j$ with factor-component $F$ of length at least $5n$.
Let us show that $D_n$ is an induced subgraph of $G_j$.
By the transitivity of the induced subgraph relation, this will imply
that $G_i$ is an induced subgraph of $G_j$.


Let $2^k$ be the smallest power of $2$ larger than $n$. Clearly, $2^{k+1} \leq 4n$.
Hence, by Lemma \ref{manybig}, there is a vertex $y$ among the first $4n$ vertices  
of $F$ with $q(y)=2^k$.
Let $F'$ be the factor induced by the vertices of $F$ starting at $y+1$. 
Since $F$ is of length at least $5n$ and $y$
is among the first $4n$ vertices of $F$, the length of $F'$ is at least $n$. 
Thus we can define an injective function $f:V(D_n) \rightarrow V(F')$ as follows:
$f(z)=y+z$ for $1 \leq z \leq n$. We claim that $f$ is an induced subgraph isomorphism
from $D_n$ to a subgraph of $G_j$. Clearly, $f(z+1)=f(z)+1$ for $1 \leq z<n$, hence it remains to
verify that adjacencies and non-adjacencies are preserved for vertices $z_1,z_2$ of
$D_n$ such that $z_2>z_1+1$. Clearly, in this case $z_1$ and $z_2$ are adjacent if and 
only if $q(z_1)=q(z_2)$. Moreover, since $f(z_2)>f(z_1)+1$, $f(z_2)$ and $f(z_1)$ are
adjacent if and only if $q(f(z_1))=q(f(z_2))$. Below we prove that  $q(f(z))=q(z)$ for $1 \leq z \leq n$
and hence $q(z_1)=q(z_2)$ if and only if $q(f(z_1))=q(f(z_2))$, implying the lemma.

Indeed, $f(z)=y+z=2^k p+2^{k_1} p_1$, where $2^{k_1}=q(z)$ and $p,p_1$ are odd numbers.
Since $2^{k_1} \leq n <2^k$, $k_1<k$ and hence $y+z$ can be written as 
$2^{k_1} (2^{k-k_1} p+p_1)$.
Since $k>k_1$, $2^{k-k_1}$ is even and hence $2^{k-k_1} p+p_1$ is odd. Consequently,
$q(y+z)=2^{k_1}$, as required. 
\qed \end{proof}

From now on, we assume the length of factor-components of graphs in ${\cal G}$ is bounded by some constant $c=c({\cal G})$.
In what follows we prove that in this case ${\cal G}$ is not an antichain as well.

$\medskip$

Let $F$ be a factor. We say that a vertex $u$ of $F$ is {\it maximal} if $q(u) \geq q(v)$ for each vertex $v$ of $F$ different from $u$.

\begin{lemma} \label{maxone}
Every factor $F$ of $D_n$ contains precisely one maximal vertex.
\end{lemma}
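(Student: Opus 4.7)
The plan is to establish existence and uniqueness separately, with existence being immediate and uniqueness being the substantive content.

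Since a factor $F$ is induced by a nonempty finite interval of integers, the function $q(\cdot)$ attains its maximum on $V(F)$, so at least one maximal vertex exists. Call a corresponding largest value $2^k$.

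For uniqueness, I would argue by contradiction. Suppose $u < v$ are two distinct vertices of $F$ with $q(u) = q(v) = 2^k$. By the definition of the power, $u = 2^k(2p+1)$ and $v = 2^k(2r+1)$ for some non-negative integers $p < r$. Then
\[
v - u = 2^k\bigl((2r+1)-(2p+1)\bigr) = 2^{k+1}(r-p) \;\geq\; 2^{k+1}.
\]
Consider the integer $w := u + 2^k$. Since $2^k < 2^{k+1} \le v - u$, we have $u < w < v$, and because $F$ is induced by an interval, $w \in V(F)$. But $w = 2^k(2p+1) + 2^k = 2^k(2p+2) = 2^{k+1}(p+1)$, so $2^{k+1}$ divides $w$, giving $q(w) \ge 2^{k+1} > 2^k$. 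This contradicts the choice of $2^k$ as the maximum value of $q$ on $V(F)$.

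The only (very mild) obstacle is making sure that the intermediate vertex $w = u + 2^k$ genuinely lies strictly between $u$ and $v$, which is why the derivation of $v - u \ge 2^{k+1}$ from the shared power is essential; once that inequality is in hand, the rest is just observing the 2-adic valuation of $w$. No further machinery from the paper is needed.
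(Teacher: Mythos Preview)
Your proof is correct and follows essentially the same approach as the paper: assume two maximal vertices of common power $2^k$, and observe that the intermediate vertex $u+2^k$ lies in the interval and has strictly larger power. The paper phrases this as writing the two vertices as $2^k p$ and $2^k(p+r)$ with $p$ odd and $r\ge 2$ even, then looking at $2^k(p+1)$; your version is the same computation with the interval-membership of $w$ spelled out more carefully.
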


\begin{proof}
Suppose that $F$  contains two maximal vertices $2^kp$ and $2^k(p+r)$
for some odd number $p$ and even number $r \geq 2$.
Then $F$ also contains the vertex $2^k(p+1)$. Clearly $p+1$ is an even
number and hence $q(2^k(p+1)) \geq 2^{k+1}$, which contradicts the
maximality of $2^k$. 
\qed \end{proof}

In light of Lemma \ref{maxone}, we denote the unique maximal vertex
of $F$ by $m(F)$. Also, let $s(F)$ be the smallest vertex of $F$.

$\medskip$

Now we define two equivalence relations on the set of factor graphs as follows. 
We say that two factors $F_1$ and $F_2$ are 
\begin{itemize}
	\item $t$-equivalent if they are of the same length and $m(F_1)-s(F_1)=m(F_2)-s(F_2)$,
	\item $\ell$-equivalent if $q(m(F_1))=q(m(F_2))$. 
\end{itemize}
We denote by $L_i$ the $\ell$-equivalence class such that $q(m(F))=2^i$ for every factor $F$ in this class.
We also order the $t$-equivalence classes (arbitrarily) and denote by $T_j$ the $j$-th class in this order.

\begin{lemma} \label{diffq}
	Let $F$ be a factor of length at most $c$. Let $v$ be a vertex of $F$ different from
	its maximal vertex $m = m(F)$. Then $q(v) = q(|m-v|)$ and, in particular, $q(v) < c$.
\end{lemma}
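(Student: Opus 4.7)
The plan is to exploit the structure of the maximal vertex together with Lemma~\ref{maxone}. Write $m = m(F) = 2^k p$ with $p$ odd, and fix a vertex $v \in V(F)$, $v \neq m$. By Lemma~\ref{maxone}, $m$ is the \emph{unique} maximal vertex, so $q(v) < 2^k$ strictly. Hence we may write $v = 2^j q$ with $q$ odd and $j < k$.

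The core computation is then
\[
|m - v| = |2^k p - 2^j q| = 2^j \, |2^{k-j} p - q|.
\]
Since $k - j \geq 1$, the quantity $2^{k-j}p$ is even while $q$ is odd, so $2^{k-j}p - q$ is odd. Therefore $q(|m-v|) = 2^j = q(v)$, which is the first part of the statement.

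For the ``in particular'' clause, I would observe that since $F$ has length at most $c$ and $v, m$ are both vertices of $F$, we have $|m - v| \leq c - 1$. Because $q(|m-v|)$ divides $|m-v|$, we get $q(v) = q(|m-v|) \leq |m-v| \leq c - 1 < c$, as required.

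There is no real obstacle here: the only subtle point is the strict inequality $j < k$, which must be justified by appeal to Lemma~\ref{maxone} (uniqueness of the maximal vertex) rather than just by the maximality of $q(m)$, because a priori another vertex could share the maximum power. Once that is in place, the rest is a one-line 2-adic computation.
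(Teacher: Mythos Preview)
Your proof is correct and essentially the same as the paper's: both hinge on the strict inequality $j<k$ (justified, as you note, via the uniqueness in Lemma~\ref{maxone}) and the elementary $2$-adic fact that $2^{k}p\pm 2^{j}q$ with $p,q$ odd and $j<k$ has exact $2$-power $2^{j}$. The only cosmetic difference is that the paper factors $v-m$ first and reads off $q(v)$, whereas you factor $v$ first and read off $q(|m-v|)$; also, you reuse the symbol $q$ for both the power function and the odd part of $v$, which you may wish to rename.
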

\begin{proof}
	We can assume without loss of generality that $v>m$.
	Let $k_1,p_1,k_2,p_2$ be such that $m=2^{k_1}p_1$
	and $v-m=2^{k_2}p_2$, with $p_1,p_2$ being odd numbers.
	Observe that $k_2<k_1$. Indeed, otherwise
	$v=2^{k_1}p_1+2^{k_2}p_2=2^{k_1}(p_1+2^{k_2-k_1}p_2)$,
	where $p_1+2^{k_2-k_1}p_2$ is a natural number. 
	Therefore, $q(v) \geq 2^{k_1}=q(m)$ in 
	contradiction either to the maximality of $m$ or to
	Lemma \ref{maxone}. 

	Consequently,
	$v=2^{k_1}p_1+2^{k_2}p_2=2^{k_2}(2^{k_1-k_2}p_1+p_2)$,
	where $2^{k_1-k_2}p_1+p_2$ is an odd number because of
	$2^{k_1-k_2}p_1$ being even. Hence,
	$q(v)=2^{k_2}=q(v-m)$.

	Finally, since the length of $F$ is at most $c$, we conclude that $v-m<c$, and 
	therefore $q(v)=q(v-m)<c$.
\qed \end{proof}

\begin{cor} \label{maxpower}
	Let $F$ be a factor of length at most $c$. Let $m$ be a vertex of $F$ with 
	$q(m) \geq c$. Then $m$ is the maximal vertex of $F$.
\end{cor}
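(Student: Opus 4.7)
The plan is to derive the corollary directly from Lemma~\ref{diffq} by a one-line contrapositive argument. Suppose, for the sake of contradiction, that $m$ is not the maximal vertex of $F$, and let $m(F)$ denote the (unique, by Lemma~\ref{maxone}) maximal vertex of $F$, so $m \ne m(F)$. Then $m$ plays the role of the vertex $v$ in Lemma~\ref{diffq}, which tells us that $q(m) < c$. This contradicts the hypothesis $q(m) \geq c$, so $m$ must in fact be the maximal vertex of $F$.

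The only mild subtlety is ensuring that the hypotheses of Lemma~\ref{diffq} are actually met: we need $F$ to have length at most $c$ (given by hypothesis) and we need a vertex of $F$ different from $m(F)$ to apply the lemma to. The latter is exactly what the contradiction hypothesis gives us. So there is essentially no obstacle here; the corollary is a direct specialization of Lemma~\ref{diffq}, rephrased as a sufficient condition for a vertex to be \emph{the} maximal vertex of a short factor. I would expect the written proof to be two or three sentences long.
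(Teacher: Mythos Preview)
Your argument is correct and is exactly the intended one: the paper states this as an immediate corollary of Lemma~\ref{diffq} (with no written proof), and your contrapositive reading of the clause ``in particular, $q(v) < c$'' is precisely how it follows.
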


\begin{cor} \label{intiso}
	Let $F_1, F_2$ be two $t$-equivalent factors. 
	Then there exists an isomorphism
	$f$ from $F_1$ to $F_2$ such that:
	\begin{enumerate}
		\item[(a)] $f(m(F_1)) = m(F_2)$;
		\item[(b)] $q(f(v)) = q(v)$ for all $v \in V(F_1)$ except possibly for $m(F_1)$.
	\end{enumerate}
\end{cor}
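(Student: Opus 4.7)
The plan is to define $f$ explicitly as the natural shift map $f(v)=v-s(F_1)+s(F_2)$ and verify the three required properties: it is a graph isomorphism, it satisfies (a), and it satisfies (b). Since $F_1$ and $F_2$ have the same length, $f$ is clearly a bijection between their vertex sets. Condition (a) is immediate from the definition of $t$-equivalence: $f(m(F_1))=m(F_1)-s(F_1)+s(F_2)=m(F_2)$.

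For condition (b) I would invoke Lemma~\ref{diffq} directly. For any $v\in V(F_1)\setminus\{m(F_1)\}$, the lemma gives $q(v)=q(|m(F_1)-v|)$. Because $f$ is a shift, $|m(F_2)-f(v)|=|m(F_1)-v|$, and $f(v)\neq m(F_2)$, so applying Lemma~\ref{diffq} again inside $F_2$ yields $q(f(v))=q(|m(F_2)-f(v)|)=q(v)$.

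It remains to check that $f$ preserves adjacency and non-adjacency. Path edges are trivially preserved since $f(v+1)=f(v)+1$. For clique edges between $u,v\in V(F_1)$ with $|u-v|>1$, adjacency in $D_n$ is determined by whether $q(u)=q(v)$. If neither endpoint is $m(F_1)$, then by (b) we have $q(u)=q(f(u))$ and $q(v)=q(f(v))$, so the equality $q(u)=q(v)$ transfers verbatim to $f(u),f(v)$. The delicate case is when one endpoint is $m(F_1)$: here condition (b) says nothing about $q(f(m(F_1)))=q(m(F_2))$ compared with $q(m(F_1))$, so this is the main obstacle.

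To handle this obstacle, I would use Lemma~\ref{maxone}: since $m(F_1)$ is the \emph{unique} maximal vertex of $F_1$, every other vertex $u\in V(F_1)$ satisfies $q(u)<q(m(F_1))$, so $u$ and $m(F_1)$ lie in different power cliques and are never joined by a clique edge inside $F_1$. The same argument applied to $F_2$ shows that $f(u)$ and $m(F_2)$ are likewise never joined by a clique edge. Thus adjacency between $m(F_1)$ and any non-neighbour on the path is ``no edge'' on both sides, and the isomorphism is established. This completes the verification of (a), (b) and the isomorphism property.
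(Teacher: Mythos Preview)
Your proof is correct and follows exactly the same approach as the paper: define $f$ as the shift map (the paper phrases it as ``map the $i$-th vertex to the $i$-th vertex''), derive (a) from $t$-equivalence, derive (b) from Lemma~\ref{diffq}, and conclude that $f$ is an isomorphism because adjacency is determined by path-adjacency and powers. Your treatment is in fact more careful than the paper's on one point: the paper simply says ``adjacency \ldots\ is completely determined by their adjacency in the body and by their powers'' and leaves it at that, whereas you explicitly handle the potential problem that $q(m(F_1))$ need not equal $q(m(F_2))$ by invoking Lemma~\ref{maxone} to show the maximal vertex has no clique-edge neighbours inside its factor.
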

\begin{proof}
	We claim that the function $f$ that maps the $i$-th vertex of factor $F_1$ 
	(starting from the smallest) to
	the $i$-th vertex of factor $F_2$ is the desired isomorphism. Indeed, property (a)
	follows from the condition that the factors are $t$-equivalent. Now property (a) 
	together with Lemma \ref{diffq} implies property (b). Finally, since adjacency between
	vertices in a factor is completely determined by their adjacency in the
	body and by their powers, we conclude that $f$ is, in fact, isomorphism. 
\qed \end{proof}

For a graph $G \in \cal D$, we denote by $G_{i,j}$ the set of factor-components of $G$ in $L_i \cap T_j$,
and define a binary relation $\le$ on graphs of $\cal D$ as follows: $G\le H$ if and only if $|G_{i,j}|\le |H_{i,j}|$
for all $i$ and $j$ (clearly in this definition one can be restricted to non-empty sets $G_{i,j}$).

Finally, for a constant  $c=c({\cal G})$ we slightly modify the definition of $\le$ to $\le _c$ as follows.
We say that a mapping $h: \mathbb{N} \to \mathbb{N}$ is $c$-preserving if it is injective and 
$h(i)=i$ for all $i \leq \lfloor \log c \rfloor$. Then   
 $G\le_c H$ if and only if there is a $c$-preserving mapping $h$ such that  $|G_{i,j}|\le |H_{h(i),j}|$
for all $i$ and $j$.

The importance of the binary relation $\leq_c$ is due to the following lemma.

\begin{lemma} \label{indorder}
Suppose the length of factor-components of  $G$ and $H$ is bounded by $c$ and 
$G \leq_c H$, then $G$ is an induced subgraph of $H$. 
\end{lemma}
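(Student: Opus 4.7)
The plan is to build an injective map $\varphi: V(G) \to V(H)$ factor-component by factor-component and verify that it realises $G$ as an induced subgraph of $H$. Fix a $c$-preserving injection $h$ witnessing $G \leq_c H$. For every non-empty class $G_{i,j}$, the inequality $|G_{i,j}| \leq |H_{h(i),j}|$ lets me pick an injection from $G_{i,j}$ into $H_{h(i),j}$, pairing each factor-component $F \in G_{i,j}$ with some $F' \in H_{h(i),j}$. Since $F$ and $F'$ share the same $t$-class $T_j$, Corollary \ref{intiso} supplies an isomorphism $\varphi_F: F \to F'$ with $\varphi_F(m(F)) = m(F')$ and $q(\varphi_F(v)) = q(v)$ for every non-maximal $v \in V(F)$. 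Gluing the $\varphi_F$ yields $\varphi$, which is globally injective because distinct factor-components of $G$ are routed to distinct factor-components of $H$, and within each component $\varphi_F$ is already an isomorphism.

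The one nontrivial thing left is to check that edges and non-edges \emph{between} different factor-components are preserved. If $u, v$ lie in different factor-components of $G$, they cannot be body-adjacent (factor-components are maximal intervals), so $uv \in E(G)$ iff $q(u) = q(v)$; the same characterisation applies to $\varphi(u), \varphi(v)$ in $H$, since they too lie in different factor-components of $H$. Hence it suffices to prove
\[
q(u) = q(v) \iff q(\varphi(u)) = q(\varphi(v)).
\]
I would split into cases according to whether each of $u, v$ is maximal in its factor-component. For a non-maximal vertex, Lemma \ref{diffq} gives $q < c$, so its power has the form $2^i$ with $i \leq \lfloor \log c \rfloor$, and Corollary \ref{intiso}(b) says $\varphi$ leaves this power unchanged. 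For $u = m(F)$ with $F \in L_i$, one has $\varphi(u) = m(F')$ with $F' \in L_{h(i)}$, hence $q(\varphi(u)) = 2^{h(i)}$.

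The main subtlety, which is precisely what the $c$-preserving condition on $h$ is designed to handle, is the interaction between these two situations: a small index $i \leq \lfloor \log c \rfloor$ can arise both as the power of a non-maximal vertex (which $\varphi$ fixes) and as the $L$-class label of a maximal vertex (whose power is shifted to $2^{h(i)}$). Because $h(i) = i$ for all such small $i$, the two types of vertices with the ``same'' label $2^i$ continue to agree after applying $\varphi$. When both vertices are maximal with labels $i \neq i'$, injectivity of $h$ gives $h(i) \neq h(i')$. When comparing a non-maximal vertex of power $2^a$ ($a \leq \lfloor \log c \rfloor$) against a maximal vertex in class $L_i$ with $i > \lfloor \log c \rfloor$, injectivity plus $h(a) = a$ force $h(i) \neq a$, so their powers differ in $H$ exactly as they do in $G$. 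These case checks constitute the heart of the argument; once they are in place the lemma follows.
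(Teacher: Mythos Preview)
Your proposal is correct and follows essentially the same approach as the paper. The paper packages the consequences of the $c$-preserving injection $h$ into four abstract properties of the factor-component map $\phi$ and then singles out maximal vertices of power greater than $c$ (invoking Corollary~\ref{maxpower}) as the only ones whose power may change; you instead work directly with $h$ and do an explicit case split on whether each of $u,v$ is maximal, but the underlying verification is the same.
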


\begin{proof}
	We say that a factor $F$ is {\it low-powered} if $F \in L_i$, for some $i \leq \lfloor \log c \rfloor$,
        i.e. $q(m(F))\le c$.

	It can be easily checked that the definition of $\leq_c$ implies the existence of an injective function 
	$\phi : {\cal F}(G) \rightarrow {\cal F}(H)$ that possesses the following properties:
	\begin{enumerate}
		\item[(1)] $\phi$ maps each of the factors in ${\cal F}(G)$ to a $t$-equivalent factor in 
		${\cal F}(H)$;
		
		\item[(2)] $F \in {\cal F}(G)$ is a low-powered factor if and only if $\phi(F)$ is;
		
		\item[(3)]  $\phi$ preserves power of the maximal vertex for each of the 
		low-powered factors, i.e. $q(m(F)) = q(m(\phi(F)))$ for every low-powered factor 
		$F \in {\cal F}(G)$;
		
		\item[(4)] for any two factors $F_1, F_2 \in {\cal F}(G)$, $q(m(F_1)) = q(m(F_2))$
		if and only if $q(m(\phi(F_1))) = q(m(\phi(F_2)))$.
	\end{enumerate}
	
	To show that $G$ is an induced subgraph of $H$ we define a witnessing function
	that maps vertices of a factor $F \in {\cal F}(G)$ to vertices of $\phi(F) \in {\cal F}(H)$ 
	according to an
	isomorphism described in Corollary \ref{intiso}. This mapping guarantees that 
	a factor $F$ of $G$ is isomorphic to the factor $\phi(F)$ of $H$. Therefore it remains to 
	check that adjacency relation between 
	vertices in different factors is preserved under the defined mapping.
	
	Note that adjacency between two vertices in different factors is
	determined entirely by powers of these vertices. Moreover, Corollary \ref{intiso} 
	and property (3) of $\phi$ imply that our mapping preserves powers of all vertices 
	except possibly 
	maximal vertices of power more that $c$. Therefore in order to complete
	the proof we need only to make sure that in graph $G$ a maximal vertex $m$ of a 
	factor $F$ with $q(m)>c$ is adjacent to a vertex $v$ in a factor different from $F$
	if and only if the corresponding images of $m$ and $v$ are adjacent in $H$.
	
	Taking into account Corollary \ref{maxpower} we derive that a maximal vertex 
	with $q(m) > c$ is adjacent to a vertex $v$ in a different factor if and only if 
	$v$ is maximal and $q(m)=q(v)$. Now the desired conclusion follows from
	Corollary \ref{intiso} and properties (2) and (4) of function $\phi$.
\qed \end{proof}

\begin{lemma}\label{short}
The set of graphs in $\cal D$ in which factor-components have size at most $c$ is well-quasi-ordered
by the $\le_c$ relation. 
\end{lemma}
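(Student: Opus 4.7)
The plan is to encode each graph $G$ whose factor-components have length at most $c$ by a finite-support ``profile'' that lives in a well-quasi-ordered set, and then to verify that wqo on profiles implies wqo under $\le_{c}$. Since every factor-component has length at most $c$, only finitely many $t$-equivalence classes $T_{j}$ can be realized; denote this finite set by $\mathcal{J}$. The profile of $G$ is the family of counts $(|G_{i,j}|)_{i\in\mathbb{N},\, j\in\mathcal{J}}$, which has finite support because $G$ is finite. I split it into a \emph{low part} consisting of the coordinates with $i\le \lfloor\log c\rfloor$, viewed as an element of $\mathbb{N}^{K}$ where $K=\{0,1,\dots,\lfloor\log c\rfloor\}\times\mathcal{J}$ is finite and componentwise-ordered, and a \emph{high part}, encoded as the finite sequence in $(\mathbb{N}^{\mathcal{J}})^{*}$ obtained by listing, in increasing order of $i>\lfloor\log c\rfloor$, the non-zero columns $(|G_{i,j}|)_{j\in\mathcal{J}}$.

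The set $\mathbb{N}^{K}$ is wqo under componentwise $\le$ by Dickson's lemma, and the same holds for $\mathbb{N}^{\mathcal{J}}$. Applying Higman's lemma (Lemma~\ref{HL}) to $(\mathbb{N}^{\mathcal{J}},\le)$ gives a wqo on $(\mathbb{N}^{\mathcal{J}})^{*}$. Since a finite product of wqos is a wqo, the set of profiles $\mathbb{N}^{K}\times(\mathbb{N}^{\mathcal{J}})^{*}$ with the product order is wqo.

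Given any infinite sequence $G^{1},G^{2},\dots$ of such graphs, the corresponding profiles form an infinite sequence in this product wqo, so there exist indices $s<t$ such that the low part of $G^{s}$ is componentwise dominated by that of $G^{t}$, and the high-part sequence of $G^{s}$ Higman-embeds into that of $G^{t}$. The Higman embedding furnishes an order-preserving injection $f$ from the set of high indices $i$ with $(|G^{s}_{i,j}|)_{j}\neq \vec 0$ to analogous high indices of $G^{t}$, with componentwise dominance on the associated columns. Taking $h$ to be the identity on $\{0,1,\dots,\lfloor\log c\rfloor\}$, equal to $f$ on the non-zero high indices of $G^{s}$, and any injective extension elsewhere (harmless because $|G^{s}_{i,j}|=0$ on the remaining indices) yields the $c$-preserving map required by the definition of $\le_{c}$. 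The one technical point to watch is that $h$ must be globally injective; this is automatic because the low range $\{0,\dots,\lfloor\log c\rfloor\}$ and the range of $f$ lie on opposite sides of $\lfloor\log c\rfloor$, so the two partial injections glue together without collision. I expect the main (minor) obstacle to be keeping the low/high bookkeeping clean enough that the implication ``product-wqo dominance $\Rightarrow\le_{c}$'' is transparent; the underlying wqo facts are standard consequences of Dickson and Higman.
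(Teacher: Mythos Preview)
Your proof is correct and follows essentially the same route as the paper's: encode each graph by the matrix of counts $|G_{i,j}|$, split into a low part (rows $i\le\lfloor\log c\rfloor$) handled componentwise and a high part handled via Higman's lemma on sequences of rows, then take the product. The only cosmetic differences are that you invoke Dickson's lemma directly for the fixed-size low part (whereas the paper phrases the same step as two applications of Higman's lemma to fixed-length words), and you spell out explicitly the construction of the $c$-preserving map $h$ from the Higman embedding, including why the low and high partial injections glue injectively; the paper leaves this implicit.
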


\begin{proof}
We associate with each graph $G\in \cal D$ containing no factor-component of size larger than $c$ 
a matrix $M_G=m(i,j)$ with $m(i,j)=|G_{i,j}|$.

Each row of this matrix corresponds to an $\ell$-equivalence class
and we delete any row corresponding to $L_i$ with $i > \lfloor \log c \rfloor$ which is empty (contains only $0$s).
This leaves a finite amount of rows (since $G$ is finite).

Each column of $M_G$ corresponds to a $t$-equivalence class and we delete all columns corresponding to $t$-equivalence classes 
containing factors of size larger than $c$ (none of these classes has a factor-component of $G$). This leaves precisely $\binom{c+1}{2}$ columns in $M_G$. 
 
We define the relation $\preceq_c$ on the set of matrices constructed in this way as follows. We say that $M_1\preceq_c M_2$ 
if and only if there is a $c$-preserving mapping $\beta$ such that  $m_1(i,j)\le m_2(\beta(i),j)$ for all $i$ and $j$. 

It is not difficult to see that if $M_{G_1}\preceq_c M_{G_2}$, then $G_1\le_c G_2$. Therefore, if $\preceq_c$ is a well-quasi-order,
then $\le_c$ is a well-quasi-ordered too. The well-quasi-orderability of matrices follows by repeated applications of Higman's lemma.
First, we split each matrix $M$ into two sub-matrices $M'$ and $M''$ so that $M'$ contains the first $\lfloor \log c \rfloor$ rows  and $M''$
contains the remaining rows. 

To see that the set of matrices $M'$ is {\sc wqo} we apply Higman's lemma twice. 
First, the set of rows is {\sc wqo} since each of them is a finite word over the alphabet of non-negative integers 
(which is {\sc wqo} by the ordinary arithmetic $\le$ relation). Second, the set of matrices is {\sc wqo} since each of them 
is a finite word over the alphabet of rows. 

Similarly, the set of  matrices $M''$ is {\sc wqo}.

Note that in both applications of Lemma \ref{HL} to $M'$ and in the first application to $M''$,
we considered sets of sequences of the \emph{same length}. Hence, in this, case, Higman's
lemma in fact implies the existence of two sequences one of them is \emph{coordinate-wise}
smaller than the other, exactly what we need in these cases. 


Finally, the set of matrices $M$ is {\sc wqo} since each of them is a word of two letters 
($M'$ and $M''$) over the alphabet which is {\sc wqo}. 
\qed \end{proof}

Combining Lemmas~\ref{longpaths} and \ref{short}, we obtain the main result of this section.

\begin{theorem} \label{mainwqo}
${\cal D}$ is {\sc wqo} by the induced subgraph relation.
\end{theorem}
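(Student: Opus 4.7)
The plan is to deduce Theorem~\ref{mainwqo} directly from Lemmas~\ref{longpaths} and~\ref{short} (together with Lemma~\ref{indorder}), by a case analysis on whether the factor-component lengths in a given infinite sequence are bounded. Since the induced subgraph relation clearly admits no infinite strictly decreasing chain (each graph has finitely many induced subgraphs up to isomorphism), it suffices to rule out infinite antichains: I take an arbitrary infinite sequence $\mathcal{G} = G_1, G_2, \ldots$ of graphs in $\mathcal{D}$ and produce indices $i \neq j$ with $G_i$ an induced subgraph of $G_j$.

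First I consider the case where the lengths of factor-components of the $G_k$ are unbounded over $k$. Then $\mathcal{G}$ contains graphs with arbitrarily long factor-components, and Lemma~\ref{longpaths} immediately supplies a comparable pair, so $\mathcal{G}$ is not an antichain.

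Otherwise, there is a constant $c = c(\mathcal{G})$ such that every factor-component of every $G_k$ has length at most $c$. In this case each $G_k$ lies in the set $\mathcal{D}_c \subseteq \mathcal{D}$ consisting of all graphs in $\mathcal{D}$ whose factor-components have size at most $c$. By Lemma~\ref{short}, the relation $\leq_c$ is a well-quasi-order on $\mathcal{D}_c$, so the infinite sequence $G_1, G_2, \ldots$ must contain indices $i < j$ with $G_i \leq_c G_j$. Then Lemma~\ref{indorder} (applicable because both graphs have factor-component length at most $c$) yields that $G_i$ is an induced subgraph of $G_j$, again showing $\mathcal{G}$ is not an antichain.

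Since the two cases are exhaustive and both prevent $\mathcal{G}$ from being an infinite antichain, $\mathcal{D}$ is well-quasi-ordered by the induced subgraph relation. There is no real obstacle at this stage: all of the work has been absorbed into the preceding lemmas, and the only thing to be careful about is making sure the dichotomy between "unbounded factor-component length in $\mathcal{G}$" and "uniformly bounded factor-component length in $\mathcal{G}$" is the right one to invoke Lemmas~\ref{longpaths} and~\ref{short} respectively, with the constant $c$ in the latter case taken large enough to bound \emph{all} factor-components appearing in the sequence.
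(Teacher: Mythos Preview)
Your argument is correct and matches the paper's own approach exactly: the theorem is obtained by combining Lemmas~\ref{longpaths} and~\ref{short} (with Lemma~\ref{indorder} bridging $\le_c$ to the induced subgraph relation) via the same bounded/unbounded dichotomy on factor-component length. There is nothing to add.
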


\end{document}